\documentclass[12pt]{article}

\usepackage[colorlinks, linkcolor=blue, citecolor=blue]{hyperref}           
\usepackage{color}
\usepackage{graphicx,subfigure,amsmath,amssymb,amsfonts,bm,epsfig,epsf,url,dsfont}
\usepackage{amsthm,mathrsfs}
\usepackage{bigints}

\newtheorem{thm}{Theorem}[section]
\newtheorem{lem}[thm]{Lemma}

\newtheorem{remark}[thm]{Remark}

\newtheorem{definition}[thm]{Definition}

\newtheorem{corollary}[thm]{Corollary}

\usepackage{tikz}
\usepackage{bbm}
\usepackage[small,bf]{caption}
\usepackage[top=1in,bottom=1in,left=1in,right=1in]{geometry}
\usepackage{fancybox}
\usepackage{hyperref}
\usepackage{algorithm}
\usepackage{algpseudocode}
\usepackage{verbatim}
\usepackage[titletoc,toc]{appendix}
\usepackage{amsmath}
\usepackage{array}
\usepackage{tabularx}
\usepackage{booktabs}
\usepackage{authblk}

\usepackage{mathtools}

\usepackage{scalerel,stackengine}
\stackMath
\newcommand\reallywidehat[1]{%
\savestack{\tmpbox}{\stretchto{%
  \scaleto{%
    \scalerel*[\widthof{\ensuremath{#1}}]{\kern-.6pt\bigwedge\kern-.6pt}%
    {\rule[-\textheight/2]{1ex}{\textheight}}%WIDTH-LIMITED BIG WEDGE
  }{\textheight}% 
}{0.5ex}}%
\stackon[1pt]{#1}{\tmpbox}%
}

\newcommand*{\rom}[1]{\expandafter\@slowromancap\romannumeral #1@}
\newcommand{\SO}{\mathrm{SO}}

\usepackage{xspace}

\numberwithin{equation}{section}

\allowdisplaybreaks

\begin{document}
\title{Moment Recovery from Tomographic Projections}
\title{Group-invariant moments under tomographic projections}

\author[1]{Amnon Balanov\thanks{Corresponding author: \url{amnonba15@gmail.com}}}
\author[1]{Tamir Bendory}
\author[2]{Dan Edidin}

\affil[1]{\normalsize School of Electrical and Computer Engineering, Tel Aviv University, Tel Aviv 69978, Israel}

\affil[2]{Department of Mathematics, University of Missouri, Columbia, MO 65211, USA}

\maketitle
\begin{abstract}
Let $f:\mathbb{R}^n\to\mathbb{R}$ be an unknown object, and suppose the observations are 
tomographic projections of randomly rotated copies of $f$ of the form $Y = P(R\cdot f)$, where $R$ is Haar-uniform in $\mathrm{SO}(n)$ and $P$ is the projection onto an $m$-dimensional subspace, so that $Y:\mathbb{R}^m\to\mathbb{R}$. 
We prove that, whenever $d\le m$, the $d$-th order moment of the projected data determines the full $d$-th order Haar-orbit moment of $f$, independently of the ambient dimension $n$.
We further provide an explicit algorithmic procedure for recovering the latter from the former. As a consequence, any identifiability result for the unprojected model based on the $d$-th order group-invariant moment extends directly to the tomographic setting at the same moment order. 
In particular, for $n=3$, $m=2$, and $d=2$, our result recovers a classical result in the cryo-EM literature: the covariance of the 2D projection images determines the second-order rotationally invariant moment of the underlying 3D object. 
\end{abstract}

\section{Introduction}

Tomographic imaging problems arise in a wide range of scientific settings in which an unknown object is observed only through lower-dimensional projections acquired at various orientations. In this work, the object is modeled as a function $f:\mathbb{R}^n\to\mathbb{R}$,  whereas each tomographic observation $Y$ is a function on $\mathbb{R}^m$ with $m<n$. Specifically, we consider measurements of the form
\begin{align}
    Y = P(R\cdot f) + \xi,
    \label{eqn:projective-model}
\end{align}
where $R\in \mathrm{SO}(n)$ is an unknown rotation, $(R\cdot f)(x)=f(R^{-1}x)$ denotes the natural action of $\mathrm{SO}(n)$ on the object, $P$ is the tomographic projection operator onto a fixed $m$-dimensional subspace, and $\xi$ is additive noise. Thus, $Y:\mathbb{R}^m\to\mathbb{R}$ is a noisy tomographic projection of a randomly oriented copy of $f$~\cite{natterer2001mathematics,herman2009fundamentals}.

A canonical example of~\eqref{eqn:projective-model} is single-particle cryo-electron microscopy (cryo-EM), where the unknown object is a 3D molecular structure and the data consist of a large collection of noisy 2D projection images acquired at random viewing directions \cite{bendory2020single,singer2020computational}. More broadly, this paradigm belongs to the class of \emph{orbit-recovery} problems, in which an unknown signal is observed only after an action of a latent group element, possibly followed by corruption or partial observation \cite{bandeira2023estimation,fan2024maximum}.

A useful reference point is the corresponding unprojected orbit-recovery model, in which one observes
\begin{align}
    Y = R\cdot f + \xi,
    \label{eqn:unprojected-model}
\end{align}
without the projection operator $P$. In the high-noise regime, which is often the statistically most challenging regime of interest, a central theme in orbit recovery is that identifiability and sample complexity are governed by the lowest-order moment that distinguishes the orbit of the underlying signal. For many compact group actions in the unprojected setting, this cut-off moment is by now well understood, and it forms the basis of a broad literature on moment-based and invariant-based methods in signal processing, imaging, and structural biology \cite{bandeira2023estimation,bhamre2015orthogonal,levin20183d,bendory2017bispectrum,bendory2026provable}. By contrast, much less is understood in the tomographic setting~\eqref{eqn:projective-model}. In particular, it is not clear a priori whether tomographic projection preserves the same low-order orbit information, nor whether the minimal informative moment in the unprojected model remains sufficient after tomographic projection.

This question is already implicit in the classical cryo-EM literature. In cryo-EM, Kam's theorem shows that, under uniformly distributed viewing directions, the covariance of the 2D projection images determines the second-order rotationally invariant moment of the underlying 3D volume \cite{kam1980reconstruction,bendory2024sample,bhamre2015orthogonal}. In other words, although the volume itself is never directly observed, the full second-order rotationally invariant moment 
%encoded by its Fourier-domain moment tensor 
can still be recovered from the projected data. This naturally raises a broader question: when are the projected model~\eqref{eqn:projective-model} and the unprojected model~\eqref{eqn:unprojected-model} equivalent at the level of moments?

The purpose of this work is to answer this question. We study orbit recovery from tomographic projections under random rotations and ask when the $d$-th order moments of the projected data determine the corresponding $d$-th order %orbit
moments of the underlying object. Our main result shows that, under Haar-uniform distribution over $\SO(n)$, this equivalence holds whenever $d\le m$: the $d$-th order  moment of the projected data determines the full %Haar-orbit 
$d$-th order moment of the object, regardless of the ambient dimension $n$. Consequently, any orbit-identifiability result for the unprojected model that is formulated in terms of the $d$-th order Haar-orbit moment transfers immediately to the tomographic setting under the same conditions. 

In particular, Kam's classical cryo-EM relation appears as the special case $(n,m,d)=(3,2,2)$. 
We note, however, that the derivations in the cryo-EM literature, such as Kam's original analysis~\cite{kam1980reconstruction}, establish the second-order result in a highly explicit and model-specific manner.
Our result shows that this phenomenon is not specific to the particular case $(n,m,d)=(3,2,2)$, but follows from a 
a general statement in Fourier slice geometry: Whenever $d \le m$, any $d$ frequencies lie in a common $m$-dimensional slice and the Haar invariance of the rotational distribution implies that the projected $d$-th order moment determines the full $d$-th order moment.

The underlying intuition is illustrated in Figure~\ref{fig:1}. By the Fourier slice theorem, the Fourier transform of a tomographic projection is the restriction of the ambient Fourier transform to a rotated central slice; see Figure~\ref{fig:1}(a). In the cryo-EM setting, this implies that any two frequencies lie in a common 2D slice. More generally, if $d\le m$, then any $d$ frequencies lie in a common $m$-dimensional slice; see Figure~\ref{fig:1}(b)-(c). The proof turns this geometric observation into a formal theorem by combining the Fourier slice theorem with Haar invariance, which allows one to transport any such slice back to a reference projection geometry.
More broadly, the present work identifies the threshold $d\le m$ as the natural regime in which $d$-th order orbit information is preserved under tomographic projection. This provides a direct bridge between moment-based identifiability in the unprojected model and moment-based identifiability in the tomographic model.

\begin{figure*}[t!]
    \centering
    \includegraphics[width=0.95 \linewidth]{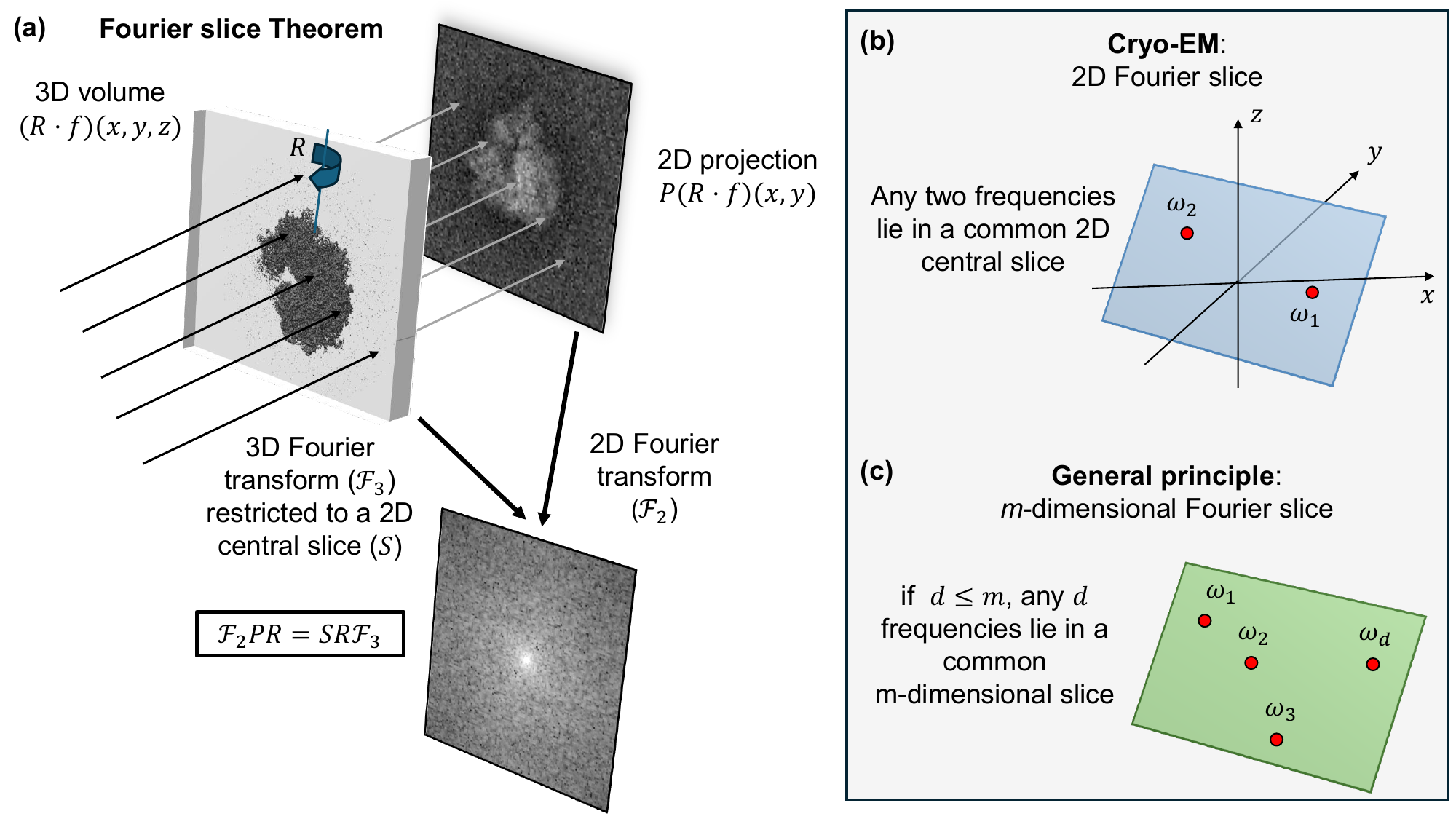}
    \caption{ \textbf{Illustration of the moment-lifting mechanism.} \textbf{(a)} Fourier slice theorem} in 3D: the Fourier transform of a tomographic projection $P(R\cdot f)$ equals the restriction of the ambient Fourier transform $\widehat f$ to a rotated 2D central slice. \textbf{(b)} Cryo-EM case $(n,m,d)=(3,2,2)$: any two frequencies lie in a common 2D slice, explaining why second-order projection statistics determine the full second-order rotational invariant. \textbf{(c)} General case: if $d\le m$, any $d$ frequencies lie in a common $m$-dimensional slice. This is the underlying geometric reason why projected $d$-th order moments determine the full %Haar-orbit
    $d$-th order moments.
    \label{fig:1} 
\end{figure*}

\section{Preliminaries}

\paragraph{Notation.}
We write $\mathbb{R}^n$ and $\mathbb{C}$ for the $n$-dimensional real Euclidean space and the complex numbers, respectively. The group of $n\times n$ rotation matrices is denoted by $\mathrm{SO}(n)$, and $\mathrm{Haar}(\mathrm{SO}(n))$ stands for the normalized Haar measure on $\mathrm{SO}(n)$. For $x,y \in \mathbb{R}^n$, $\langle x,y\rangle$ denotes the standard Euclidean inner product. Throughout, we write $\mathcal{F}_n$ and $\mathcal{F}_m$ for the Fourier transforms on $\mathbb{R}^n$ and $\mathbb{R}^m$, respectively. Explicitly, for $f\in L^1(\mathbb{R}^n)$,
\begin{align}
    \widehat f(\omega) = (\mathcal{F}_n f)(\omega) \triangleq \int_{\mathbb{R}^n} f(x)\, e^{-i\langle \omega,x\rangle}\, dx,
    \qquad \omega\in\mathbb{R}^n.
\end{align}
For a subspace $E \subset \mathbb{R}^n$, $S_E$ denotes restriction of a function on $\mathbb{R}^n$ to $E$. Throughout, expectations are understood whenever the relevant integrability conditions hold.

\subsection{Tomographic observation model}

Recall that we consider the tomographic observation model in which the observations $Y:\mathbb{R}^m\to\mathbb{R}$ are given by
\begin{align}
    Y = P(R \cdot f) + \xi,
    \label{eqn:projective-model2}
\end{align}
where $f:\mathbb{R}^n\to\mathbb{R}$ is the unknown object, $R \in \mathrm{SO}(n)$ is an unknown random rotation acting on $f$ via $(R\cdot f)(x)\triangleq f(R^{-1}x)$ and $\xi$ is an additive Gaussian noise with a variance level $\sigma^2$. We assume that the rotations $R$ are drawn independently from the Haar-uniform distribution on $\mathrm{SO}(n)$.
%\paragraph{Population and finite-sample quantities.}
Since our focus is on population moments, and since under independent noise with known distribution the noise contribution to these moments is explicit, we henceforth omit the noise term and work with the noiseless model $Y=P(R\cdot f)$.
In particular, for Gaussian noise, the corresponding signal-only projected moments can be estimated from noisy observations by using appropriate Hermite/Wick-debiased polynomials; see, e.g., \cite[Section~4.1 and Appendix~C]{balanov2026generalized}.
The operator $P$ is a tomographic projection onto a fixed $m$-dimensional subspace. More precisely, writing $(u,v)\in\mathbb{R}^m\times\mathbb{R}^{n-m}$, we define
\begin{align}
    P(f)(u) \triangleq \int_{\mathbb{R}^{n-m}} f(u,v)\, dv,
    \qquad u\in\mathbb{R}^m.
\end{align}
Thus, $P(f)$ is obtained by integrating $f$ along the last $n-m$ coordinates.

\subsection{The Fourier slice theorem}
%\subsection{Projection-slice and Fourier-slice theorem}

A basic structural fact underlying tomography is that projection in the spatial domain corresponds to restriction to a lower-dimensional central slice in the Fourier domain. In addition, it is well-known that the Fourier operator and rotations commute; thus, after rotation of the object, the corresponding Fourier slice rotates accordingly. We record this standard fact in the following theorem using the notation of the present work, and include a short proof for completeness.

To formulate this identity, it is convenient to identify the observation space $\mathbb{R}^m$ with a fixed reference subspace of $\mathbb{R}^n$. We therefore introduce the canonical embedding $\iota:\mathbb{R}^m\to\mathbb{R}^n$,
\begin{align}
    \iota(\eta) \triangleq (\eta,0),
    \qquad \eta\in\mathbb{R}^m.
    \label{eqn:canonical-embedding}
\end{align}
In particular, $\iota(\mathbb{R}^m)=\mathbb{R}^m\times\{0\}\subset\mathbb{R}^n$  is the reference $m$-dimensional subspace, and the slices corresponding to different viewing directions are precisely its rotated copies. 
%\tamir{we should be very clear that this is not our result, but a classical result, and put it here with a short proof for completeness and to fix notation} \amnon{Added above.}

\begin{thm}[Fourier slice theorem]
\label{thm:fourier_slice_rot} 
Let $f \in L^1(\mathbb{R}^n)$ and $R \in \mathrm{SO}(n)$. Then, for every $\eta \in \mathbb{R}^m$,
\begin{align}
    \widehat{P(R\cdot f)}(\eta) &= \widehat f(R^{-1}\iota(\eta)). \label{eq:fourier_slice_theorem_rot}
\end{align}
\end{thm}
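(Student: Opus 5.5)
The plan is to compute $\widehat{P(R\cdot f)}(\eta)$ directly from the definitions and to use Fubini's theorem together with a single linear change of variables. First, note that $R\cdot f\in L^1(\mathbb{R}^n)$, because $|\det R|=1$. By Fubini, the partial integral $P(R\cdot f)(u)=\int_{\mathbb{R}^{n-m}} f(R^{-1}(u,v))\,dv$ is defined for almost every $u$, and $P(R\cdot f)\in L^1(\mathbb{R}^m)$ with norm at most $\|f\|_{L^1}$. Hence its Fourier transform $\mathcal{F}_m$ is well defined pointwise for every $\eta\in\mathbb{R}^m$.

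Next, write
\begin{align*}
\widehat{P(R\cdot f)}(\eta)=\int_{\mathbb{R}^m}\int_{\mathbb{R}^{n-m}} f(R^{-1}(u,v))\,e^{-i\langle \eta,u\rangle}\,dv\,du .
\end{align*}
The key observation is that $\langle \eta,u\rangle=\langle \iota(\eta),(u,v)\rangle$, since the last $n-m$ coordinates of $\iota(\eta)$ vanish. Because the integrand is absolutely integrable on $\mathbb{R}^n$, Fubini lets us merge the iterated integral into a single integral over $x=(u,v)\in\mathbb{R}^n$:
\begin{align*}
\widehat{P(R\cdot f)}(\eta)=\int_{\mathbb{R}^n} f(R^{-1}x)\,e^{-i\langle \iota(\eta),x\rangle}\,dx .
\end{align*}
This is exactly $\widehat{R\cdot f}(\iota(\eta))$, which is the Fourier slice identity for the unrotated geometry.

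Finally, substitute $y=R^{-1}x$. The Jacobian has modulus $1$, and orthogonality gives $\langle \iota(\eta),Ry\rangle=\langle R^{\top}\iota(\eta),y\rangle=\langle R^{-1}\iota(\eta),y\rangle$. The integral therefore becomes $\int f(y)e^{-i\langle R^{-1}\iota(\eta),y\rangle}dy=\widehat f(R^{-1}\iota(\eta))$, which proves the identity. The only point requiring care is the measure-theoretic justification of Fubini, namely the integrability of $|f(R^{-1}x)|$ on $\mathbb{R}^n$ and the almost-everywhere definition of $P$. This is routine under the hypothesis $f\in L^1$. The rest is algebra, so I do not expect a genuine obstacle.
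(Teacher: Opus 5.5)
Your proposal is correct and follows essentially the same route as the paper's proof: you write the iterated integral, merge it into a single integral over $\mathbb{R}^n$ using $\langle \eta,u\rangle=\langle \iota(\eta),(u,v)\rangle$, and then substitute $x=Ry$ with $|\det R|=1$ and $R^\top=R^{-1}$. Your explicit Fubini justification, that $P(R\cdot f)\in L^1(\mathbb{R}^m)$ and the integrand is absolutely integrable, is a welcome addition that the paper leaves implicit.
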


\begin{proof}[Proof of Theorem~\ref{thm:fourier_slice_rot}]
By definition of $P$ and of the $m$-dimensional Fourier transform,
\begin{align}
    \widehat{P(R\cdot f)}(\eta) = \int_{\mathbb{R}^m} \left( \int_{\mathbb{R}^{n-m}} f(R^{-1}(u,v))\,dv \right) e^{-i\langle \eta,u\rangle}\,du .
\end{align}
Identifying $(u,v)\in\mathbb{R}^m\times\mathbb{R}^{n-m}$ with $x\in\mathbb{R}^n$, and using $\iota(\eta)=(\eta,0)$, this becomes
\begin{align}
    \widehat{P(R\cdot f)}(\eta) = \int_{\mathbb{R}^n} f(R^{-1}x)e^{-i\langle \iota(\eta),x\rangle}\,dx .
\end{align}
Changing variables $x=Rz$, using $|\det R|=1$, and using the orthogonality identity $R^\top=R^{-1}$, we obtain
\begin{align}
    \widehat{P(R\cdot f)}(\eta) = \int_{\mathbb{R}^n} f(z)e^{-i\langle \iota(\eta),Rz\rangle}\,dz = \int_{\mathbb{R}^n} f(z)e^{-i\langle R^{-1}\iota(\eta),z\rangle}\,dz = \widehat f(R^{-1}\iota(\eta)).
\end{align}
\end{proof}

Theorem~\ref{thm:fourier_slice_rot} shows that each tomographic projection reveals the values of the Fourier transform of the object on an $m$-dimensional central slice. After rotation of the object, the observed Fourier data correspond to the restriction of $\widehat f$ to the rotated slice $R^{-1}\iota(\mathbb{R}^m)$. This fundamental geometric observation is the basis for the moment-lifting result established below.

\subsection{Fourier-domain moment tensors}

Since the tomographic structure of the problem is most naturally expressed in the Fourier domain through the Fourier slice theorem, we work throughout with Fourier-domain moment tensors. This entails no loss of information, since Fourier-domain moments are equivalent to their real-space counterparts by Fourier inversion. Because Fourier transforms are generally complex-valued, the associated moment tensors are naturally complex-valued as well, even when the underlying object is real-valued.

\begin{definition}[Full and projected Fourier-domain moment tensors]
\label{def:fourier_moment_tensors}
Let $f:\mathbb{R}^n\to\mathbb{R}$, let $R\sim \mathrm{Haar}(\mathrm{SO}(n))$, and let $P$ be the tomographic projection operator onto the fixed reference $m$-dimensional subspace. For $d\ge 1$, define the \emph{full %Haar-orbit
$d$-th order Fourier moment tensor}, associated with~\eqref{eqn:unprojected-model}, by
\begin{align}
    M_{f, \mathrm{full}}^{(d)}(\omega_1,\dots,\omega_d) \triangleq \mathbb{E}_{R\sim\mathrm{Haar}(\mathrm{SO}(n))} \left[\prod_{j=1}^d \widehat f(R^{-1}\omega_j)\right],
    \qquad    \omega_1,\dots,\omega_d\in\mathbb{R}^n,
\end{align}
and the \emph{projected $d$-th order Fourier moment tensor}, associated with~\eqref{eqn:projective-model}, by
\begin{align}
    M_{f, \mathrm{proj}}^{(d)}(\eta_1,\dots,\eta_d) \triangleq \mathbb{E}_{R\sim\mathrm{Haar}(\mathrm{SO}(n))} \left[\prod_{j=1}^d \widehat{P(R\cdot f)}(\eta_j)\right],
    \qquad    \eta_1,\dots,\eta_d\in\mathbb{R}^m.
\end{align}
\end{definition}
The main question is under what conditions the projected moment tensor $M_{f,\mathrm{proj}}^{(d)}$ determines the full %Haar-orbit 
moment tensor $M_{f,\mathrm{full}}^{(d)}$.

\section{Main results}
In this section we present the main theoretical result of the paper and its consequences for tomographic orbit recovery. We then describe the corresponding constructive procedure (i.e., an algorithm) for evaluating the full %Haar-orbit 
moment from the projected moment.

\subsection{Moment equivalence under tomographic projection}

We now state the main result of this work. 
%Its significance is both geometric and statistical. 
Geometrically, it shows that under Haar-uniform rotations, the $d$-th order Fourier moment of the tomographic projections contains the same information as the full Haar-orbit $d$-th order Fourier moment of the underlying volume, provided that $d \le m$. %Statistically, this implies
This, in turn, implies that any recovery principle based on the full $d$-th order orbit moment transfers directly to the tomographic setting without increasing the required moment order. In this sense, the theorem below establishes a general principle showing that tomographic projected moments determine the corresponding full 
%Haar-orbit 
moments.

\begin{thm}[Recovery of full %Haar-orbit 
$d$-th order moments from tomographic projected moments]
\label{thm:moment_lifting_tomography}
Assume that the viewing directions are Haar-uniform on $\mathrm{SO}(n)$. Let $\iota$ denote the canonical embedding defined in~\eqref{eqn:canonical-embedding}, and let $M_{f, \mathrm{full}}^{(d)}$ and $M_{f, \mathrm{proj}}^{(d)}$ be the full and projected Fourier-domain moment tensors introduced in Definition~\ref{def:fourier_moment_tensors}. Then,  the following statements hold.

\begin{enumerate}
    \item For every $\eta_1,\dots,\eta_d \in \mathbb{R}^m$,
    \begin{align}
        M_{f, \mathrm{proj}}^{(d)}(\eta_1,\dots,\eta_d) = M_{f, \mathrm{full}}^{(d)}(\iota(\eta_1),\dots,\iota(\eta_d)).        \label{eq:proj_full_restriction}
    \end{align}

    \item Let $\omega_1,\dots,\omega_d \in \mathbb{R}^n$, and suppose that there exists an $m$-dimensional subspace $E \subset \mathbb{R}^n$ such that $\omega_1,\dots,\omega_d \in E$. Then, there exist $Q \in \mathrm{SO}(n)$ and $\eta_1,\dots,\eta_d \in \mathbb{R}^m$ such that
    \begin{align}
        \omega_j = Q \iota( \eta_j),
        \qquad j=1,\dots,d,
        \label{eqn:related-by-rotation}
    \end{align}
    and
    \begin{align}
        M_{f, \mathrm{full}}^{(d)}(\omega_1,\dots,\omega_d) = M_{f, \mathrm{proj}}^{(d)}(\eta_1,\dots,\eta_d).    \label{eq:full_from_proj_general}
    \end{align}

    \item If $d \le m$, then $M_{f, \mathrm{proj}}^{(d)}$ determines $M_{f, \mathrm{full}}^{(d)}$ on all of $(\mathbb{R}^n)^d$.
\end{enumerate}
\end{thm}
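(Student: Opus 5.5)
The three parts will be proved in order, each feeding the next.

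For part 1, apply Theorem~\ref{thm:fourier_slice_rot} pointwise inside the expectation. For each fixed $R$ and each $j$, it gives $\widehat{P(R\cdot f)}(\eta_j)=\widehat f(R^{-1}\iota(\eta_j))$. Taking products over $j$ and then expectations over $R\sim\mathrm{Haar}(\mathrm{SO}(n))$ yields~\eqref{eq:proj_full_restriction} directly from Definition~\ref{def:fourier_moment_tensors}. This uses the standing integrability assumption to justify the expectations.

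For part 2, the main geometric step is to produce $Q\in\mathrm{SO}(n)$ with $Q\,\iota(\mathbb{R}^m)=E$.
\begin{enumerate}
\item Choose an orthonormal basis $q_1,\dots,q_m$ of $E$ and complete it to an orthonormal basis $q_1,\dots,q_n$ of $\mathbb{R}^n$.
\item Let $Q$ have columns $q_i$. Then $Q\in\mathrm{O}(n)$ and $Q e_i=q_i$, so $Q\iota(\mathbb{R}^m)=E$.
\item If $\det Q=-1$ and $m<n$, replace $q_n$ by $-q_n$. This does not change $Q\iota(\mathbb{R}^m)$ because $q_n\perp E$. (The case $m=n$ is trivial, since then there is no projection; there one can flip $q_1$ and adjust $\eta_j$ accordingly, or simply exclude it since $m<n$.)
\item Set $\eta_j$ to be the first $m$ coordinates of $Q^{-1}\omega_j$. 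Since $Q^{-1}\omega_j\in\iota(\mathbb{R}^m)$, this gives $\omega_j=Q\iota(\eta_j)$, which is~\eqref{eqn:related-by-rotation}.
\end{enumerate}
Next, use left invariance of Haar measure:
\[
M_{f,\mathrm{full}}^{(d)}(Q\iota(\eta_1),\dots,Q\iota(\eta_d))=\mathbb{E}_R\prod_j\widehat f\big((Q^{-1}R)^{-1}\iota(\eta_j)\big).
\]
Substituting $R'=Q^{-1}R$, which is again Haar-distributed, gives $M_{f,\mathrm{full}}^{(d)}(\iota(\eta_1),\dots,\iota(\eta_d))$. By part 1 this equals $M_{f,\mathrm{proj}}^{(d)}(\eta_1,\dots,\eta_d)$, which is~\eqref{eq:full_from_proj_general}.

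For part 3, take any $\omega_1,\dots,\omega_d\in\mathbb{R}^n$. Their span has dimension at most $d\le m\le n$, so it can be enlarged to some $m$-dimensional subspace $E$ containing all of them. Part 2 then expresses $M_{f,\mathrm{full}}^{(d)}(\omega_1,\dots,\omega_d)$ as a value of $M_{f,\mathrm{proj}}^{(d)}$. The construction of $Q$ and the $\eta_j$ is explicit (Gram--Schmidt on the $\omega_j$, then completion of the basis), which also yields the algorithmic recovery procedure. The value does not depend on the choice of $E$ or $Q$, since it always equals the intrinsically defined $M_{f,\mathrm{full}}^{(d)}(\omega_1,\dots,\omega_d)$.

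The only genuinely delicate point is the determinant correction in part 2, which keeps $Q$ in $\mathrm{SO}(n)$ rather than $\mathrm{O}(n)$. It relies on having a free direction orthogonal to $E$, which exists precisely because $m<n$.
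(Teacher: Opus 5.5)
Your proposal is correct and follows essentially the same route as the paper. Part 1 uses the Fourier slice theorem inside the expectation. Part 2 uses left-invariance of Haar measure via $\widetilde R = Q^{-1}R$ and then part 1. Part 3 enlarges the span of the $\omega_j$ to an $m$-dimensional subspace. The only difference is that you construct $Q\in\mathrm{SO}(n)$ explicitly, flipping the sign of a column orthogonal to $E$. The paper simply asserts that such a $Q$ exists in the proof, and carries out the same column-flip construction only later in its algorithmic section.
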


\begin{proof}[Proof of Theorem~\ref{thm:moment_lifting_tomography}]
We first prove~\eqref{eq:proj_full_restriction}. By the Fourier slice theorem (Theorem~\ref{thm:fourier_slice_rot}), for every $\eta \in \mathbb{R}^m$,
\begin{align}
    \widehat{P(R\cdot f)}(\eta) =
    \widehat{f}(R^{-1}\iota(\eta)).
\end{align}
Substituting this identity into the definition of $M_{f, \mathrm{proj}}^{(d)}$, we obtain
\begin{align}
    M_{f, \mathrm{proj}}^{(d)}(\eta_1,\dots,\eta_d) &= \mathbb{E}_{R} \left[\prod_{j=1}^d \widehat{P(R\cdot f)}(\eta_j)\right] \\
    &= \mathbb{E}_{R} \left[\prod_{j=1}^d \widehat{f}(R^{-1}\iota(\eta_j)) \right] \\
    &= M_{f, \mathrm{full}}^{(d)}(\iota(\eta_1),\dots,\iota(\eta_d)),
\end{align}
which proves~\eqref{eq:proj_full_restriction}.

Next, suppose that $\omega_1,\dots,\omega_d$ lie in a common $m$-dimensional subspace $E$. Since every $m$-dimensional subspace is a rotation of the reference subspace $\iota(\mathbb{R}^m)$, there exists $Q \in \mathrm{SO}(n)$ such that
\begin{align}
    E = Q \iota(\mathbb{R}^m).
\end{align}
Hence, for each $j$, there exists $\eta_j \in \mathbb{R}^m$ such that
\begin{align}
    \omega_j = Q \iota( \eta_j).
\end{align}
Then
\begin{align}
    M_{f, \mathrm{full}}^{(d)}(\omega_1,\dots,\omega_d) &=    \mathbb{E}_{R}\left[\prod_{j=1}^d \widehat{f}(R^{-1}Q \iota(\eta_j))
    \right].
\end{align}
Using the left-invariance of Haar measure on $\mathrm{SO}(n)$, the random variable $\widetilde{R} = Q^{-1}R$ is again Haar-distributed. Therefore,
\begin{align}
    M_{f, \mathrm{full}}^{(d)}(\omega_1,\dots,\omega_d) &= \mathbb{E}_{\widetilde{R}} \left[\prod_{j=1}^d \widehat{f}(\widetilde{R}^{-1}\iota(\eta_j)) \right] \\
    &= M_{f, \mathrm{full}}^{(d)}(\iota(\eta_1),\dots,\iota(\eta_d)).
\end{align}
Combining this with~\eqref{eq:proj_full_restriction} yields
\begin{align}
    M_{f, \mathrm{full}}^{(d)}(\omega_1,\dots,\omega_d) = M_{f, \mathrm{proj}}^{(d)}(\eta_1,\dots,\eta_d),
\end{align}
which proves~\eqref{eq:full_from_proj_general}.

Finally, if $d \le m$, then any $d$ vectors $\omega_1,\dots,\omega_d \in \mathbb{R}^n$ span a subspace of dimension at most $d$, and hence at most $m$. Therefore, they lie in some $m$-dimensional subspace $E$, and the previous part shows that $M_{f, \mathrm{full}}^{(d)}(\omega_1,\dots,\omega_d)$ is determined by $M_{f, \mathrm{proj}}^{(d)}$. Since this holds for every $(\omega_1,\dots,\omega_d) \in (\mathbb{R}^n)^d$, the full $d$-th order Fourier moment $M_{f, \mathrm{full}}^{(d)}$ is completely determined by $M_{f, \mathrm{proj}}^{(d)}$.
\end{proof}
Theorem~\ref{thm:moment_lifting_tomography} identifies the threshold $d \le m$ as the precise regime in which the $d$-th order orbit moment remains fully visible through $m$-dimensional tomographic projections. Indeed, any $d$ frequencies in $\mathbb{R}^n$ span a subspace of dimension at most $d$, and therefore, when $d \le m$, they can all be embedded into a common $m$-dimensional slice. Haar invariance then allows one to transfer this slice back to the fixed reference projection geometry. In this way, the theorem shows that tomographic projection preserves $d$-th order orbit information whenever the projection dimension is at least the moment order.

An immediate consequence is that any identifiability result formulated in terms of the full %Haar-orbit 
$d$-th order moment automatically yields a corresponding identifiability result in the tomographic model. This is the content of the following corollary.

\begin{corollary}[Orbit recovery from projected $d$-th order moments]
\label{cor:orbit_recovery_from_proj_moments}
Assume the setting of Theorem~\ref{thm:moment_lifting_tomography}, and suppose that the %Haar-orbit 
$d$-th order Fourier moment $M_{f, \mathrm{full}}^{(d)}$ uniquely determines the orbit of $f$ under the action of $\mathrm{SO}(n)$. If $d \le m$, then the projected $d$-th order Fourier moment $M_{f, \mathrm{proj}}^{(d)}$ also uniquely determines the orbit of $f$.
\end{corollary}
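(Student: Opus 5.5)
The plan is to derive the corollary directly from part~3 of Theorem~\ref{thm:moment_lifting_tomography}, arguing by comparison of two objects. Let $f,g:\mathbb{R}^n\to\mathbb{R}$ satisfy $M_{f,\mathrm{proj}}^{(d)} = M_{g,\mathrm{proj}}^{(d)}$ as functions on $(\mathbb{R}^m)^d$. The goal is to conclude that $g = Q\cdot f$ for some $Q\in\mathrm{SO}(n)$. Before doing so, I will spell out precisely what "uniquely determines the orbit" means in this argument: the map $f\mapsto M_{f,\mathrm{full}}^{(d)}$ separates orbits in the class of objects under consideration. Concretely, $M_{f,\mathrm{full}}^{(d)} = M_{g,\mathrm{full}}^{(d)}$ implies that $f$ and $g$ lie in the same $\mathrm{SO}(n)$-orbit. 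This is the hypothesis the corollary grants.

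The key step is to show that the full moments of $f$ and $g$ coincide. Fix an arbitrary $(\omega_1,\dots,\omega_d)\in(\mathbb{R}^n)^d$. Since $d\le m$, the span of the $\omega_j$ has dimension at most $m$, so it is contained in some $m$-dimensional subspace $E$. By part~2 of Theorem~\ref{thm:moment_lifting_tomography}, there exist $Q\in\mathrm{SO}(n)$ and $\eta_1,\dots,\eta_d\in\mathbb{R}^m$ with $\omega_j = Q\iota(\eta_j)$. I will choose $Q$ and the $\eta_j$ so that they depend only on the $\omega_j$ and not on the object. Applying \eqref{eq:full_from_proj_general} to $f$ and to $g$ then gives
\begin{align}
M_{f,\mathrm{full}}^{(d)}(\omega_1,\dots,\omega_d) = M_{f,\mathrm{proj}}^{(d)}(\eta_1,\dots,\eta_d) = M_{g,\mathrm{proj}}^{(d)}(\eta_1,\dots,\eta_d) = M_{g,\mathrm{full}}^{(d)}(\omega_1,\dots,\omega_d).
\end{align}
Since the tuple was arbitrary, $M_{f,\mathrm{full}}^{(d)} = M_{g,\mathrm{full}}^{(d)}$ on all of $(\mathbb{R}^n)^d$. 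The orbit-separating hypothesis then yields that $g$ lies in the orbit of $f$, which is the claim.

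The argument involves no real obstacle; the one point needing care is the object-independence of the lifting map. I must make sure that the choice of $(Q,\eta_1,\dots,\eta_d)$ in part~2 is made from the $\omega_j$ alone, so that the same recipe reconstructs $M_{\cdot,\mathrm{full}}^{(d)}$ from $M_{\cdot,\mathrm{proj}}^{(d)}$ for every object. This is exactly what makes the map $M_{\mathrm{proj}}\mapsto M_{\mathrm{full}}$ well defined, so that the orbit map of the projected model factors through the orbit map of the full model. I will also note that the integrability assumptions implicit in Definition~\ref{def:fourier_moment_tensors} are required to hold for both $f$ and $g$.
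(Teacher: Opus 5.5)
Your proposal is correct and follows the paper's proof. The paper simply composes part~3 of Theorem~\ref{thm:moment_lifting_tomography} (the projected moment determines the full moment) with the orbit-separating hypothesis, and your two-object comparison makes that composition explicit. Your concern about the object-independence of $(Q,\eta_1,\dots,\eta_d)$ is already settled by the proof of part~2 (equivalently Lemma~\ref{lem:projected-to-unprojected}), which shows that the identity holds for every choice with $Q\in\mathrm{SO}(n)$ and $\omega_j = Q\iota(\eta_j)$.
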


\begin{proof}[Proof of Corollary~\ref{cor:orbit_recovery_from_proj_moments}]
By Theorem~\ref{thm:moment_lifting_tomography}, if $d \le m$, then $M_{f, \mathrm{proj}}^{(d)}$ determines $M_{f, \mathrm{full}}^{(d)}$. By assumption, $M_{f, \mathrm{full}}^{(d)}$ uniquely determines the orbit of $f$. Hence $M_{f, \mathrm{proj}}^{(d)}$ uniquely determines the orbit of $f$ as well.
\end{proof}

Corollary~\ref{cor:orbit_recovery_from_proj_moments} captures the main conceptual implication of our result. While a substantial literature has developed moment-identifiability results for orbit recovery in the unprojected model, comparatively little is known in the tomographic setting. Our theorem and corollary uncover a geometric bridge between these two regimes at the level of moments: under Haar-uniform viewing directions, and whenever $d\le m$, the $d$-th order moment of the tomographic data determines the same $d$-th order %Haar-orbit 
moment that appears in the unprojected model. As a result, any identifiability statement for the unprojected model formulated in terms of $d$-th order orbit moments immediately carries over to the tomographic setting. In particular, if the minimal informative moment order in the unprojected model is $d$, then the same order remains sufficient after projection.

As a direct corollary of Theorem~\ref{thm:moment_lifting_tomography}, one recovers the classical cryo-EM second-order moment phenomenon: the second-order statistics of random two-dimensional projections determine the full second-order rotational invariant of the underlying three-dimensional volume. This property is used, for example, to estimate the covariance of a 3D molecular structure from its tomographic projections~\cite{anden2018structural,gilles2025cryo}.

\begin{corollary}[Cryo-EM second-order moment recovery]
\label{cor:cryoEM-second-moment}
Consider the cryo-EM setting $n=3$ and $m=2$. Then, the projected second-order Fourier moment tensor determines the full %Haar-orbit
second-order Fourier moment tensor. Equivalently, for every pair of frequencies $\omega_1,\omega_2 \in \mathbb{R}^3$, the value of $M^{(2)}_{f,\mathrm{full}}(\omega_1,\omega_2)$ can be recovered from $M^{(2)}_{f,\mathrm{proj}}$.
\end{corollary}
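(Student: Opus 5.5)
The plan is to obtain Corollary~\ref{cor:cryoEM-second-moment} by specializing Theorem~\ref{thm:moment_lifting_tomography} to $(n,m,d)=(3,2,2)$. The only hypothesis of part~3 that needs checking is $d\le m$, and here $d=2\le 2=m$. Part~3 then says immediately that $M^{(2)}_{f,\mathrm{proj}}$ determines $M^{(2)}_{f,\mathrm{full}}$ on all of $(\mathbb{R}^3)^2$, which is the first sentence of the corollary.

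For the ``equivalently'' clause, I would make the recovery explicit for a given pair $\omega_1,\omega_2\in\mathbb{R}^3$, following part~2.

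\emph{Step 1: choose a plane.} Let $E=\mathrm{span}\{\omega_1,\omega_2\}$ if the two vectors are linearly independent. If they are dependent, including when one or both vanish, take $E$ to be any $2$-dimensional subspace of $\mathbb{R}^3$ containing both. Such a plane exists because their span has dimension at most $1$.

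\emph{Step 2: build the rotation.} Pick an orthonormal basis $e_1,e_2$ of $E$, set $e_3=e_1\times e_2$, and let $Q=[e_1\,e_2\,e_3]$. Then $Q\in\SO(3)$ and $Q\iota(\mathbb{R}^2)=E$.

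\emph{Step 3: read off the 2D frequencies.} Set $\eta_j=(\langle e_1,\omega_j\rangle,\langle e_2,\omega_j\rangle)$ for $j=1,2$. Then $\omega_j=Q\iota(\eta_j)$.

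\emph{Step 4: evaluate.} By~\eqref{eq:full_from_proj_general},
\begin{align}
M^{(2)}_{f,\mathrm{full}}(\omega_1,\omega_2)=M^{(2)}_{f,\mathrm{proj}}(\eta_1,\eta_2).
\end{align}

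No step here is a genuine obstacle, since all the substance sits in Theorem~\ref{thm:moment_lifting_tomography}. The only point needing care is the degenerate case in Step~1, where the span has dimension less than $2$. This is handled by enlarging the span to a full plane before building $Q$.
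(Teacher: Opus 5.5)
Your proposal is correct and matches the paper, which presents this as an immediate specialization of Theorem~\ref{thm:moment_lifting_tomography} to $(n,m,d)=(3,2,2)$ using $d\le m$. Your explicit Steps~1--4 mirror Algorithm~\ref{alg:momentRecovery}; setting $e_3=e_1\times e_2$ is just a 3D-specific way to get $Q\in\mathrm{SO}(3)$, where the paper instead flips the sign of a column beyond the $m$-th.
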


\begin{remark}[On the role of Haar-uniform viewing directions]
The proof of Theorem~\ref{thm:moment_lifting_tomography} uses Haar invariance in an essential way. Indeed, if $R\sim \mathrm{Haar}(\mathrm{SO}(n))$, then for every fixed $Q\in \mathrm{SO}(n)$, the random matrices $R$ and $Q^{-1}R$ have the same distribution. This is precisely what makes all $m$-dimensional central slices statistically equivalent under rotations, and therefore allows the projected moment to depend only on the queried frequencies, rather than on the particular slice used to represent them.

For a general non-uniform viewing distribution, this rotational equivalence is lost: different slices are sampled with different weights, so the projected $d$-th order moment may depend on how the containing slice is embedded in $\mathbb{R}^n$. Consequently, the projected moment need not coincide directly with a slice-independent Haar-orbit moment of the underlying object.

If, however, the viewing distribution is known and admits a strictly positive density with respect to Haar measure, then one may in principle compensate for the non-uniform sampling by reweighting the observations, thereby converting expectations under the given distribution into Haar expectations. In that case, an analogue of the theorem may still hold after such normalization. By contrast, when the viewing distribution is unknown, or when its density vanishes on a set of positive Haar measure, such a reduction is generally unavailable.
\end{remark}

\subsection{Constructive recovery of the full moment tensor}

We now describe a constructive procedure, derived from the proof of Theorem~\ref{thm:moment_lifting_tomography}, for evaluating the full 
%Haar-orbit 
$d$-th order Fourier moment tensor from its tomographic counterpart. Assume throughout that the viewing directions are Haar-uniform on $\mathrm{SO}(n)$, and let $d\le m$. Given a query tuple $(\omega_1,\dots,\omega_d)\in(\mathbb{R}^n)^d$, our goal is to evaluate $M_{f,\mathrm{full}}^{(d)}(\omega_1,\dots,\omega_d)$ using only the projected moment tensor $M_{f,\mathrm{proj}}^{(d)}$.

Let $\widetilde E\subset \mathbb{R}^n$ be any $m$-dimensional linear subspace containing $\omega_1,\ldots,\omega_d$, and let $Q\in \mathrm{SO}(n)$ satisfy $\widetilde E = Q\,\iota(\mathbb{R}^m)$.
Then, there exist unique vectors $\eta_1,\ldots,\eta_d\in \mathbb{R}^m$ such that $\omega_j = Q\,\iota(\eta_j)$ for each $j = 1, \ldots d$.

\begin{lem}
\label{lem:projected-to-unprojected}
With the notation above, there is an equality of moments,
\begin{align}
    M_{f,\mathrm{full}}^{(d)}(\omega_1,\dots,\omega_d)
    =    M_{f,\mathrm{proj}}^{(d)}(\eta_1,\dots,\eta_d).
\end{align}
\end{lem}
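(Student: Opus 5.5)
This lemma is essentially part~2 of Theorem~\ref{thm:moment_lifting_tomography}, stated for the explicitly chosen subspace $\widetilde E$ and rotation $Q$. The plan is to repeat that argument directly, in three steps: rewrite the full moment via the slice representation, absorb $Q$ into the Haar-distributed rotation, and then apply the Fourier slice theorem.

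First, I will substitute $\omega_j = Q\,\iota(\eta_j)$ into the definition of $M^{(d)}_{f,\mathrm{full}}$. This gives
\begin{align}
M_{f,\mathrm{full}}^{(d)}(\omega_1,\dots,\omega_d) = \mathbb{E}_{R}\Bigl[\prod_{j=1}^d \widehat f\bigl(R^{-1}Q\,\iota(\eta_j)\bigr)\Bigr] = \mathbb{E}_{R}\Bigl[\prod_{j=1}^d \widehat f\bigl((Q^{-1}R)^{-1}\iota(\eta_j)\bigr)\Bigr].
\end{align}

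Next, I will set $\widetilde R = Q^{-1}R$. Because Haar measure on the compact group $\mathrm{SO}(n)$ is left-invariant, $\widetilde R \sim \mathrm{Haar}(\mathrm{SO}(n))$ whenever $R$ is. The expectation is therefore unchanged if we replace $Q^{-1}R$ by a fresh Haar rotation. This yields $M_{f,\mathrm{full}}^{(d)}(\omega_1,\dots,\omega_d) = M_{f,\mathrm{full}}^{(d)}(\iota(\eta_1),\dots,\iota(\eta_d))$.

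Finally, I will use the Fourier slice theorem (Theorem~\ref{thm:fourier_slice_rot}), which says $\widehat{P(R\cdot f)}(\eta)=\widehat f(R^{-1}\iota(\eta))$. Taking the expectation of the product over $j$ identifies the last expression with $M_{f,\mathrm{proj}}^{(d)}(\eta_1,\dots,\eta_d)$; this is exactly part~1 of Theorem~\ref{thm:moment_lifting_tomography}.

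There is no real obstacle here. The only points needing care are the following:
\begin{itemize}
\item confirming that the $\eta_j$ are well defined, which holds because $Q\iota$ is injective with image $\widetilde E \ni \omega_j$;
\item checking that the change of variables under the Haar measure is legitimate, i.e.\ that the integrand is integrable, which is assumed throughout the paper;
\item noting that the result does not depend on the choices of $\widetilde E$ and $Q$, since each choice gives the same left-hand side.
\end{itemize}
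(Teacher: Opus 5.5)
Your proposal is correct and matches the paper's approach. The paper proves this lemma by invoking part~(2) of Theorem~\ref{thm:moment_lifting_tomography}, and the proof of that part is exactly your three steps: substitute $\omega_j = Q\,\iota(\eta_j)$, use left-invariance of Haar measure for $\widetilde R = Q^{-1}R$ (legitimate since $Q \in \mathrm{SO}(n)$ by hypothesis), and finish with part~(1) via the Fourier slice theorem.
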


\begin{proof} By part~(2) of Theorem~\ref{thm:moment_lifting_tomography}, whenever $\omega_1,\dots,\omega_d$ lie in a common $m$-dimensional subspace and admit a representation of the form $\omega_j = Q\,\iota(\eta_j)$, for every $j=1,\dots,d$, 
one has $M_{f,\mathrm{full}}^{(d)}(\omega_1,\dots,\omega_d) = M_{f,\mathrm{proj}}^{(d)}(\eta_1,\dots,\eta_d)$.
This proves the claim.

\end{proof}

Lemma~\ref{lem:projected-to-unprojected} immediately yields a constructive evaluation rule: to compute the full moment at an arbitrary query tuple, it suffices to choose any $m$-dimensional slice containing the queried frequencies, express those frequencies in the coordinates of that slice, and evaluate the projected moment at the resulting coordinates. We summarize the procedure in Algorithm~\ref{alg:momentRecovery}.

Let $E \coloneqq \operatorname{span}\{\omega_1,\dots,\omega_d\}$ and $r \coloneqq \dim(E)$, so that $r\le d\le m$. Choose an orthonormal basis $u_1,\dots,u_r$ of $E$, for example by applying Gram-Schmidt to $\omega_1,\dots,\omega_d$ and discarding any zero vectors that arise. Since $r\le m$, extend this basis to an orthonormal family $u_1,\dots,u_m$ spanning an $m$-dimensional subspace $\widetilde E\subset\mathbb{R}^n$ that contains $E$, and then extend further to an orthonormal basis $u_1,\dots,u_n$ of $\mathbb{R}^n$. Let $Q\in O(n)$ be the orthogonal matrix whose columns are $u_1,\dots,u_n$. By construction, the first $m$ columns of $Q$ are exactly $u_1,\dots,u_m$, and therefore
$Q\iota(\mathbb{R}^m)=\operatorname{span}\{u_1,\dots,u_m\}=\widetilde E$, where $\iota:\mathbb{R}^m\to\mathbb{R}^n$ is the canonical embedding introduced in \eqref{eqn:canonical-embedding}.

Since each $\omega_j$ lies in $\widetilde E$, the vector $Q^\top \omega_j$ has vanishing last $n-m$ coordinates. Thus, $Q^\top\omega_j\in \iota(\mathbb{R}^m)$, and hence there exists a unique vector $\eta_j\in\mathbb{R}^m$ such that
\begin{align}
    Q^\top\omega_j=\iota(\eta_j),    \qquad\text{equivalently,}\qquad
    \omega_j=Q\iota(\eta_j).
\end{align}

A priori, $Q$ need not belong to $\mathrm{SO}(n)$. However, when $n>m$, this causes no difficulty. Indeed, if $Q\notin \mathrm{SO}(n)$, replace any column $u_k$ with $k>m$ by $-u_k$, and denote the resulting matrix by $Q'$. Then, $Q'\in \mathrm{SO}(n)$, while
$(Q')^\top\omega_j = Q^\top\omega_j$ for every $j$, because $\omega_j\in \widetilde E=\operatorname{span}\{u_1,\dots,u_m\}$ is orthogonal to every $u_k$ with $k>m$. In particular,
\begin{align}
    \omega_j = Q'\iota(\eta_j)
    \qquad\text{for all } j=1,\dots,d.
\end{align}
Therefore, by Lemma~\ref{lem:projected-to-unprojected},
$M_{f,\mathrm{full}}^{(d)}(\omega_1,\dots,\omega_d) = M_{f,\mathrm{proj}}^{(d)}(\eta_1,\dots,\eta_d)$.
Accordingly, evaluating the full moment at an arbitrary query tuple reduces to expressing the query frequencies in coordinates of any $m$-dimensional slice containing them and then evaluating the projected moment at the resulting coordinates. 

\begin{algorithm}[t!]
  \caption{Recovery of the full moment from the projected moment}
  \label{alg:momentRecovery}
\textbf{Input:} A tuple $(\omega_1,\ldots,\omega_d)\in(\mathbb{R}^n)^d$ with $d\le m$, and the projected moment tensor $M_{f,\mathrm{proj}}^{(d)}$.\\
\textbf{Output:} The value of $M_{f,\mathrm{full}}^{(d)}(\omega_1,\ldots,\omega_d)$. \\
\textbf{Procedure:}
\begin{enumerate}
  \item Let $E=\operatorname{span}\{\omega_1,\ldots,\omega_d\}$. Since $\dim(E)\le d\le m$, construct an orthonormal family $u_1,\ldots,u_m$ spanning an $m$-dimensional subspace $\widetilde E\subseteq\mathbb{R}^n$ containing $E$, for example by applying Gram-Schmidt to $\omega_1,\ldots,\omega_d$ and extending the resulting family.

    \item Extend $u_1,\ldots,u_m$ to an orthonormal basis $u_1,\ldots,u_n$ of $\mathbb{R}^n$, and let $Q\in O(n)$ be the matrix with columns $u_1,\ldots,u_n$.  Then, $\widetilde E=Q\iota(\mathbb{R}^m)$.
    
  \item For each $j=1,\ldots,d$, let $\eta_j\in\mathbb{R}^m$ be the vector of the first $m$ coordinates of $Q^\top\omega_j$, equivalently, the unique vector such that $\omega_j=Q\iota(\eta_j)$.

  \item Return
  \[
      M_{f,\mathrm{full}}^{(d)}(\omega_1,\ldots,\omega_d) = M_{f,\mathrm{proj}}^{(d)}(\eta_1,\ldots,\eta_d).
  \]
\end{enumerate}
\end{algorithm}

\paragraph{Computational cost.}
Algorithm~\ref{alg:momentRecovery}  extends an orthonormal basis of $E=\operatorname{span}\{\omega_1,\ldots,\omega_d\}$ to a full orthonormal basis of $\mathbb{R}^n$ in order to produce an orthogonal matrix $Q$.
A naive implementation of this full completion may require $O(n^3)$ operations.
However, the full matrix $Q$ is not needed computationally.
Let $r=\dim(E)\le d$, and let $U_r\in\mathbb{R}^{n\times r}$ be an orthonormal basis matrix for $E$, obtained for example by applying Gram--Schmidt or QR decomposition to the $n\times d$ matrix with columns $\omega_1,\ldots,\omega_d$.
Then, each $\omega_j$ is represented by the coordinate vector $U_r^\top\omega_j\in\mathbb{R}^r$, and one may define
\begin{align}
    \eta_j = \bigl(U_r^\top\omega_j,0,\ldots,0\bigr) \in \mathbb{R}^m.
\end{align}
Thus, for a single query tuple $(\omega_1,\ldots,\omega_d)$, the full $n\times n$ matrix $Q$ need not be formed, and the geometric lifting step can be implemented in $O(nd^2)$ operations.
In particular, for fixed moment order $d$, this cost is linear in the ambient dimension $n$.

\section{Discussion and outlook}

A key conceptual implication of Theorem~\ref{thm:moment_lifting_tomography} is that the threshold $d \le m$ arises directly from the geometry of Fourier slices. Indeed, any $d$ frequencies span a subspace of dimension at most $d$, and hence can be contained in a common $m$-dimensional central slice precisely when $d \le m$. Haar invariance then allows one to transport information from an arbitrary such slice back to the fixed reference projection geometry. In this sense, the condition $d \le m$ is not merely technical, but reflects the intrinsic geometry of the tomographic model.

At the same time, our results also delineate the scope of this mechanism. First, the argument relies essentially on Haar-uniform viewing directions. The invariance of Haar measure is used in a crucial way, and the conclusion need not persist under non-uniform orientation distributions. This issue is especially relevant in cryo-EM, where viewing directions are often markedly non-uniform~\cite{baldwin2020non, naydenova2017measuring}. In that setting, the projected $d$-th order moment is no longer simply related to a rotationally invariant full moment, and additional ambiguities or degeneracies may arise. 
Extending the present moment-equivalence principle to broader families of distributions over $\SO(n)$ remains an important open problem.

Second, while the theorem shows that $d$-th order orbit information is preserved under projection whenever $d \le m$, it does not address the complementary regime $d>m$. In that case, generic $d$-tuples of frequencies no longer lie in a common $m$-dimensional slice, so the geometric argument underlying the proof breaks down. Consequently, the projected $d$-th order moment need not determine the full $d$-th order Haar-invariant moment of the object. This limitation should not be confused, however, with non-identifiability of the object itself. For example, in the cryo-EM setting one has $n=3$ and $m=2$. For the third moment, $d=3>m$, and hence the full third-order Haar-invariant moment of the unprojected volume is not generally determined by the third-order moment of the projected observations through the slice mechanism considered here. Nevertheless, prior works have shown that third-order projection statistics can still suffice to identify the volume, at least locally or generically, in suitable cryo-EM models~\cite{bandeira2023estimation,fan2024maximum}. Thus, full recovery of a particular $d$-th order invariant moment is a sufficient mechanism for transferring moment-based identifiability results from the unprojected model to the tomographic model, but it is not necessarily a necessary condition for recovering the orbit of $f$. In the regime $d>m$, the relevant question is therefore which components of the higher-order invariant information remain visible from projection data, and whether these accessible components suffice for object recovery.

\section*{Acknowledgment}
T.B. and D.E. are supported in part by BSF under Grant 2020159. T.B. is also supported in part by NSF-BSF under Grant 2024791 and in part by ISF under Grant 1924/21.

\bibliographystyle{plain}
%\bibliography{refs} 

\begin{thebibliography}{10}
	
	\bibitem{anden2018structural}
	Joakim And{\'e}n and Amit Singer.
	\newblock Structural variability from noisy tomographic projections.
	\newblock {\em SIAM Journal on Imaging Sciences}, 11(2):1441--1492, 2018.
	
	\bibitem{balanov2026generalized}
	Amnon Balanov, Tamir Bendory, and Dan Edidin.
	\newblock The generalized method of moments is (almost) statistically efficient
	in low-{SNR} gaussian latent-variable models.
	\newblock {\em arXiv preprint arXiv:2605.30095}, 2026.
	
	\bibitem{baldwin2020non}
	Philip~R Baldwin and Dmitry Lyumkis.
	\newblock Non-uniformity of projection distributions attenuates resolution in
	cryo-{EM}.
	\newblock {\em Progress in Biophysics and Molecular Biology}, 150:160--183,
	2020.
	
	\bibitem{bandeira2023estimation}
	Afonso~S Bandeira, Ben Blum-Smith, Joe Kileel, Jonathan Niles-Weed, Amelia
	Perry, and Alexander~S Wein.
	\newblock Estimation under group actions: recovering orbits from invariants.
	\newblock {\em Applied and Computational Harmonic Analysis}, 66:236--319, 2023.
	
	\bibitem{bendory2020single}
	Tamir Bendory, Alberto Bartesaghi, and Amit Singer.
	\newblock Single-particle cryo-electron microscopy: Mathematical theory,
	computational challenges, and opportunities.
	\newblock {\em IEEE Signal Processing Magazine}, 37(2):58--76, 2020.
	
	\bibitem{bendory2017bispectrum}
	Tamir Bendory, Nicolas Boumal, Chao Ma, Zhizhen Zhao, and Amit Singer.
	\newblock Bispectrum inversion with application to multireference alignment.
	\newblock {\em IEEE Transactions on Signal Processing}, 66(4):1037--1050, 2017.
	
	\bibitem{bendory2024sample}
	Tamir Bendory and Dan Edidin.
	\newblock The sample complexity of sparse multireference alignment and
	single-particle cryo-electron microscopy.
	\newblock {\em SIAM Journal on Mathematics of Data Science}, 6(2):254--282,
	2024.
	
	\bibitem{bendory2026provable}
	Tamir Bendory, Dan Edidin, Josh Katz, Shay Kreymer, and Nir Sharon.
	\newblock Provable orbit recovery over {SO}(3) from the non-uniform second
	moment.
	\newblock {\em arXiv preprint arXiv:2602.20590}, 2026.
	
	\bibitem{bhamre2015orthogonal}
	Tejal Bhamre, Teng Zhang, and Amit Singer.
	\newblock Orthogonal matrix retrieval in cryo-electron microscopy.
	\newblock In {\em 2015 IEEE 12th International Symposium on Biomedical Imaging
		(ISBI)}, pages 1048--1052. IEEE, 2015.
	
	\bibitem{fan2024maximum}
	Zhou Fan, Roy~R Lederman, Yi~Sun, Tianhao Wang, and Sheng Xu.
	\newblock Maximum likelihood for high-noise group orbit estimation and
	single-particle cryo-{EM}.
	\newblock {\em Annals of Statistics}, 52(1):52, 2024.
	
	\bibitem{gilles2025cryo}
	Marc~Aurele Gilles and Amit Singer.
	\newblock Cryo-{EM} heterogeneity analysis using regularized covariance
	estimation and kernel regression.
	\newblock {\em Proceedings of the National Academy of Sciences},
	122(9):e2419140122, 2025.
	
	\bibitem{herman2009fundamentals}
	Gabor~T Herman.
	\newblock {\em Fundamentals of computerized tomography: image reconstruction
		from projections}.
	\newblock Springer Science \& Business Media, 2009.
	
	\bibitem{kam1980reconstruction}
	Zvi Kam.
	\newblock The reconstruction of structure from electron micrographs of randomly
	oriented particles.
	\newblock {\em Journal of Theoretical Biology}, 82(1):15--39, 1980.
	
	\bibitem{levin20183d}
	Eitan Levin, Tamir Bendory, Nicolas Boumal, Joe Kileel, and Amit Singer.
	\newblock {3D} ab initio modeling in cryo-{EM} by autocorrelation analysis.
	\newblock In {\em 2018 IEEE 15th International Symposium on Biomedical Imaging
		(ISBI 2018)}, pages 1569--1573. IEEE, 2018.
	
	\bibitem{natterer2001mathematics}
	Frank Natterer.
	\newblock {\em The mathematics of computerized tomography}.
	\newblock SIAM, 2001.
	
	\bibitem{naydenova2017measuring}
	Katerina Naydenova and Christopher~J Russo.
	\newblock Measuring the effects of particle orientation to improve the
	efficiency of electron cryomicroscopy.
	\newblock {\em Nature Communications}, 8(1):629, 2017.
	
	\bibitem{singer2020computational}
	Amit Singer and Fred~J Sigworth.
	\newblock Computational methods for single-particle electron cryomicroscopy.
	\newblock {\em Annual Review of Biomedical Data Science}, 3:163--190, 2020.
	
\end{thebibliography}

\end{document}